\newtheorem{lemma}{Lemma}
\newtheorem{corollary}{Corollary}
\newcommand{\ignore}[1]{}
\begin{document}
\title{Generalized Instantly Decodable Network Coding for Relay-Assisted Networks}
\author{\large Adel M. Elmahdy$^1$, Sameh Sorour$^2$, and Karim G. Seddik$^3$\\ [.1in]
\normalsize  \begin{tabular}{c}
$^1$Electrical Engineering Department, Alexandria University, Alexandria 21544, Egypt.\\
$^2$Computer, Electrical and Mathematical Sciences and Engineering (CEMSE) Division,\\
King Abdullah University of Science and Technology (KAUST), Thuwal, Kingdom of Saudi Arabia.\\
$^3$Electronics Engineering Department, American University in Cairo, AUC Avenue, New Cairo 11835, Egypt.\\
Email: adel.elmahdy@ieee.org, sameh.sorour@kaust.edu.sa, kseddik@aucegypt.edu
\end{tabular}
}
\maketitle

\begin{abstract}
In this paper, we investigate the problem of minimizing the frame completion delay for Instantly Decodable Network Coding (IDNC) in relay-assisted wireless multicast networks. We first propose a packet recovery algorithm in the single relay topology which employs \textit{generalized IDNC} instead of \textit{strict IDNC} previously proposed  in the literature for the same relay-assisted topology. This use of generalized IDNC is supported by showing that it is a super-set of the strict IDNC scheme, and thus can generate coding combinations that are at least as efficient as strict IDNC in reducing the average completion delay. We then extend our study to the multiple relay topology and propose a joint generalized IDNC and relay selection algorithm. This proposed algorithm benefits from the reception diversity of the multiple relays to further reduce the average completion delay in the network. Simulation results show that our proposed solutions achieve much better performance compared to previous solutions in the literature.

\end{abstract}

\begin{IEEEkeywords}
Network Coding; Relay Assisted Networks; Wireless Multicast.
\end{IEEEkeywords}

\section{Introduction}
Since the introduction of \textit{Network Coding (NC)} in the seminal paper by Ahlswede \textit{et al.} \cite{J:NetInfFlow}, it has been widely employed to improve the transmission efficiency in many network settings (\cite{C:ONC2,J:RNC2} and ref. therein). Moreover, several works focused on using \emph{opportunistic network coding (ONC)} to enhance the packet recovery process in point-to-multipoint (PMP) wireless broadcast erasure channels (\cite{4476183,Sadeghi2010} and ref. therein). One of these ONC schemes that has been widely considered in this context is \emph{Instantly Decodable Network Coding (IDNC)} due to its many desirable properties such as the instant decoding of the received packets at the selected set of receivers, simple XOR encoding and decoding, and the needlessness for decoding buffers at the receiver side to store the coded packets. These properties allow simple and power efficient designs of IDNC-capable receivers, which are more suitable for hand-held mobile terminal.

Recently, NC was extended to enhance packet recovery in \emph{relay assisted networks (RANs)} \cite{fan2009reliable,J:LuXiaRas}. This extension responded to the consideration of \textit{relay nodes (RNs)} in the standardization process of next generation mobile broadband communication systems such as 3GPP LTE-Advanced \cite{J:Relay1}, \cite{J:Relay2} and \cite{4GLte}. Deploying relay nodes is a smart solution to increase the capacity or to extend the cell coverage area by enhancing the coverage at the cell edge where the receivers suffer from low Signal to Interference plus Noise Ratio (SINR). The problem of minimizing the number of recovery transmissions (a.k.a. completion delay) using IDNC was considered in the framework of one decode-and-forward relay-assisted broadcast networks \cite{J:LuXiaRas}. The proposed IDNC-based solution was shown to achieve a better completion delay and thus transmission efficiency, as compared to IDNC with no relay case as well as automatic repeat-request (ARQ).

Even though the authors of \cite{J:LuXiaRas} considered a \emph{strict IDNC (S-IDNC)} approach when solving the completion delay problem in RANs, the senders are constrained to encode packets that include at most one lost packet for each receiver in this IDNC approach. In \cite{sorour10minimum}, it has been shown that this S-IDNC approach is too restrictive in enhancing the decoding delay of the different receivers. The authors thus proposed a more relaxed IDNC approach which allowed any packet encoding at the sender and forced each of the receivers to discard any coded packets combining two or more non-received source packets at that receivers. This \emph{generalized IDNC (G-IDNC)} approach was shown to provide much more coding flexibility at the sender and thus achieved much better completion delay performance.

This paper extends the works in \cite{fan2009reliable,J:LuXiaRas} in two directions. First, we propose the use of G-IDNC as a more efficient NC scheme to further reduce the completion delay in the one-relay setting. Although the G-IDNC approach was proven to outperform S-IDNC in decoding delay only \cite{sorour10minimum}, the coding flexibility of G-IDNC and its previous use for completion delay reduction in simple PMP networks \cite{C:BCDMin,J:CDMin} can be used to outperform the completion delay achieved by S-IDNC in the one-relay scenario. We then extend our study to the case of multiple-relay scenario, in which joint decisions on the best encoding packet and best transmitting relay should be made. We thus propose a solution for this joint decision problem, which is based on the same methodology to minimize the completion delay in PMP networks with one relay node. We finally compare this proposed solution to other famous approaches in the PMP NC literature.

\section{System Model}\label{intro}
The proposed system consists of a base station (BS) that is required to deliver a frame $\mathcal{N}$ of $N$ source packets to a set $\mathcal{M}$ of $M$ terminal nodes (TNs) over erasure channels. Each of the receivers is interested in a subset or all of the packets of $\mathcal{N}$, which defines a generic multicast scenario. The packets requested by receiver $i$ are referred to as the primary packets, while the undesired packets are referred to as the secondary packets. In addition, there is a set $\mathcal{R}$ of $R$ decode-and-forward RNs to improve the coverage of the BS. An example of such network with 3 RNs is shown in Fig. \ref{fig:SystemModel}. Similar to the assumptions and constraints of \cite{J:LuXiaRas}, we assume that the channels between the BS and RNs are better than those between the BS and the TNs. We also assume that the channel between the RNs and the TNs are better than those between the BS and the TNs. RNs operate in half-duplex mode and cannot transmit simultaneously with the BS, even if they have received all the packets in $\mathcal{N}$.

\begin{figure}
\centering
\includegraphics[width=0.5\linewidth]{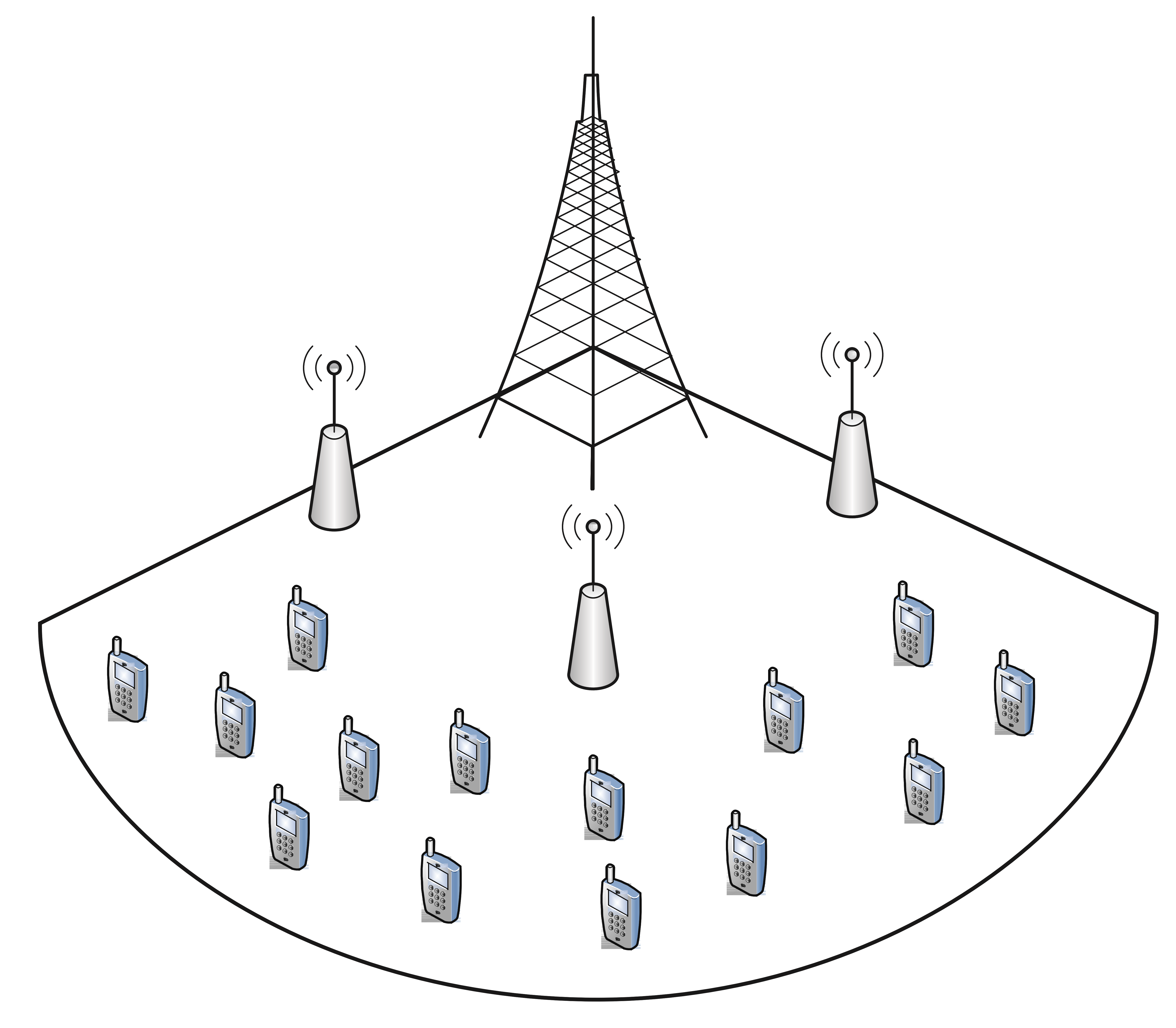}
\caption{Proposed system model for 3 relay nodes}
\label{fig:SystemModel}
\vspace{-.2in}
\end{figure}

In the \textit{initial transmission phase}, the BS sequentially sends all the source packets of the frame. These transmissions are subject to possible erasures at the RNs and TNs. For each received packet, each RN/TN feeds back a positive acknowledgement (ACK) to the BS. RNs also store the feedback sent by the TNs. When the initial transmission phase comes to an end, the BS can attribute 3 sets of packets to each TN/RN:
\begin{itemize}
\item The \textit{Has} set ($\mathcal{H}_i$): The set of primary and secondary packets that are successfully received by receiving node $i$.
\item The \textit{Lacks} set ($\mathcal{L}_i$): The set of all packets that are not received by receiving node $i$. Note that $\mathcal{N}$ = $\mathcal{H}_i \cup \mathcal{L}_i$.
\item The \textit{Wants} set ($\mathcal{W}_i$): The set of wanted packets by $i$ that are not received by it. Note that $\mathcal{W}_i \subseteq \mathcal{L}_i$.
\end{itemize}
Note that the Wants sets of RNs are always empty as they do not want any packets. The sender stores this information in a \emph{state feedback matrix (SFM)} $\mathbf{F} =
\left[f_{ij}\right],~\forall~i\in\mathcal{M},j\in\mathcal{N}$, such that $f_{ij} = 0$ if $j \in \mathcal{H}_i$, $f_{ij} =1$ if $j \in \mathcal{W}_i$, and $f_{ij} = -1$ if $j \in \mathcal{L}\setminus\mathcal{W}_i$.

Afterwards, a recovery phase begins in which both the BS and RNs should collaborate in order to deliver the missing packets of all TNs in the smallest number of transmissions. Both BS and RNs employ IDNC packet that is able to serve many TNs simultaneously in each recovery transmission. Note that each recovery transmission from the BS or a RN is still subject to erasures. Thus, the TNs (and RNs if the BS is transmitting) should respond to each received recovery transmission with an ACK, which is used to update the SFM for subsequent transmissions. The method with which these IDNC coded packets are generated to target specific TNs (and possibly RNs) and the algorithms for BS/RN selection for each recovery transmission will be described in details in Sections \ref{IDNCGraph} and \ref{Algorithm}, respectively.

\ignore{
In the first step, the BS transmits recovery coded packets which are instantly decodable to a selected group of TNs. These packets can also be instantly decodable at the RNs, thus supplying them with additional source packets. The method these coded packets is generated with to target specific TNs and RNs will be described in the next section. This process continues until either the Wants sets of all TNs are empty or each RN receives all the packets that are still wanted by any of the TNs. In the second step, one of the RNs is selected per transmission to send a coded packet to the TNs. This process continues till all the Wants sets of all TNs become empty. Note that in both steps of the recovery phase, the TNs (and also RNs in Step 1) then respond to each received (non-erased) recovery transmission with an ACK, which is used to update the SFM for subsequent transmissions.
}

\section{G-IDNC Graph}\label{IDNCGraph}
The IDNC graph \cite{sorour10minimum, C:BCDMin} provides a framework to determine the packets that can be combined together and simultaneously recovered by any given set of TNs using the same transmission. This graph $\mathcal{G}$ is constructed by first generating a vertex $v_{ij}$ in $\mathcal{G}$ for each packet $j \in \mathcal{L}_i$, and for all users. Two vertices $v_{ij}$ and $v_{kl}$ in $\mathcal{G}$ are adjacent if one of the following conditions is
true:
\begin{itemize}
\item C1: $j = l$ $\Rightarrow$ The two vertices represent the loss of the same packet $j$ by two different TNs $i$ and $k$.
\item C2: $j\in \mathcal{H}_k$ and $l \in \mathcal{H}_i$ $\Rightarrow$ The requested packet of each vertex is in the Has set of the TN of the other vertex.
\end{itemize}
Consequently, each edge between two vertices $v_{ij}$ and $v_{kl}$ in the graph means that the missing packets $j$ and $l$ at TNs $i$ and $k$ can be served simultaneously and instantly by sending either packet $j$ or $j\oplus l$ if $j=l$ or $j\neq l$, respectively. This property extends from two adjacent vertices to all cliques in the graph. A clique in a graph is a subset of that graph whose vertices are all adjacent to one another. Thus, each clique in  $\mathcal{G}$ defines a packet combination that can instantly serve all the users inducing this clique's vertices.

We can classify the vertices of a graph into two layers:
\begin{itemize}
\item Primary graph $\mathcal{G}_\rho$: It includes all the vertices from the Wants sets of all users.
\item Secondary graph $\mathcal{G}_\sigma$: It includes all the vertices that are not in the Wants set of any user.
\end{itemize}
For any transmission, we first select a clique $\kappa_\rho$ from the primary graph, which will target a set of TNs $\mathcal{T}_{\kappa_\rho}\subseteq\mathcal{M}$. For the non-targeted TNs and most importantly the RNs (i.e. $(\mathcal{M}\setminus\mathcal{T}_{\kappa_\rho}) \cup \mathcal{R}$), the adjacent secondary subgraph to $\kappa_\rho$ can be used to deliver unwanted packets to them without violating the instant decodability of the packets $\kappa_\rho$ at the TNs in $\mathcal{T}_{\kappa_\rho}$. For non-targeted TNs, this step enlarges their Has sets and thus increases the chances of having more coding opportunities in subsequent transmissions, according to condition C2. Furthermore, the RNs can receive more source packets, which increases their capability to create and send more efficient coding combinations in subsequent transmissions.

Fig. \ref{fig:Graph} depicts an example of a state feedback matrix and the corresponding IDNC graph.

\begin{figure}
\centering
\includegraphics[scale=0.38]{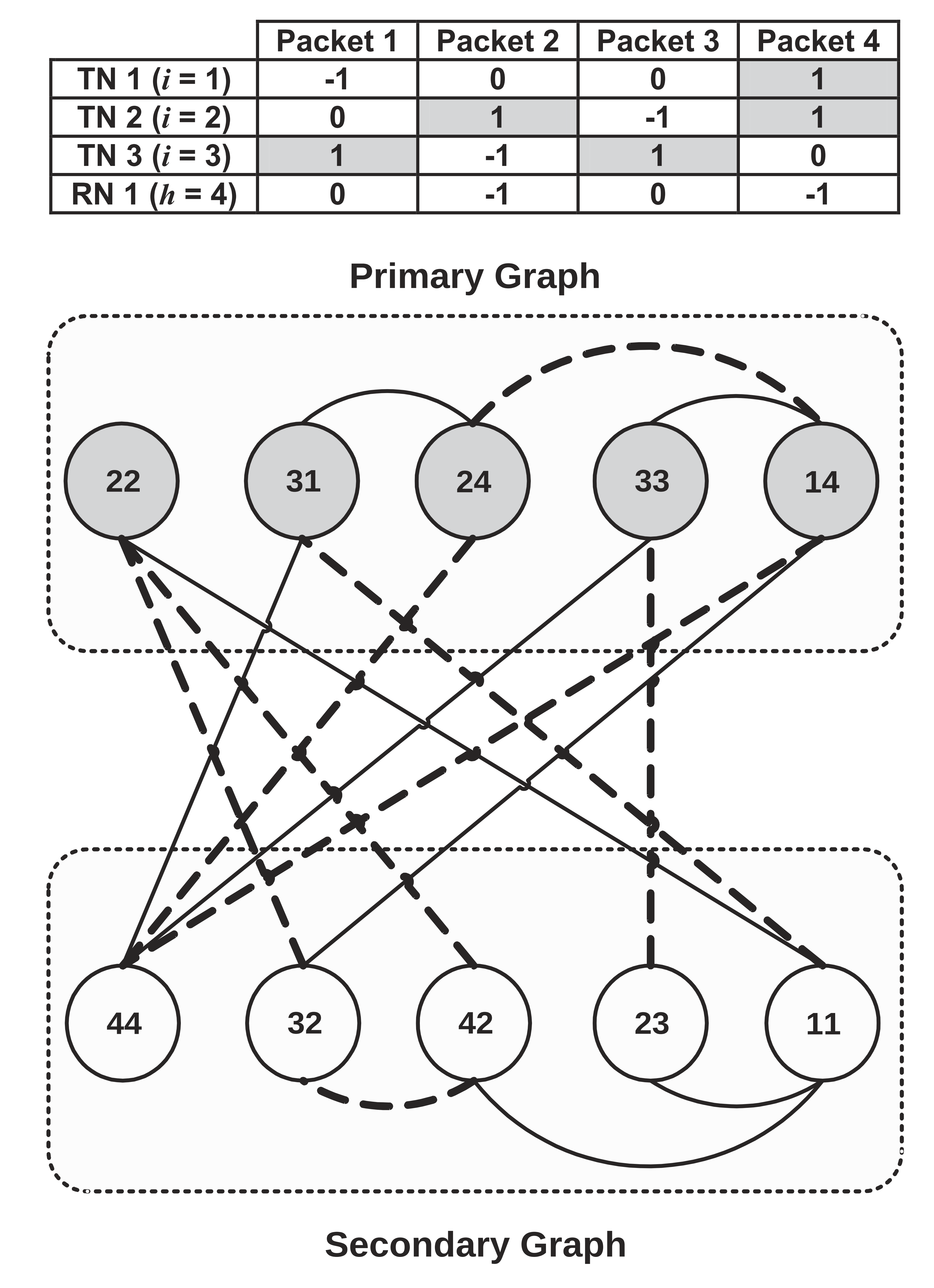}
\caption{An example of a state feedback matrix and the corresponding G-IDNC graph in a one-RN model. Dashed and solid lines represent the edges generated by C1 and C2, respectively.}
\label{fig:Graph}
\end{figure}

\section{Why G-IDNC instead of S-IDNC}\label{PerformanceComparison}
In this section, we compare the performance of G-IDNC against S-IDNC and show the ability of the former to achieve a better step towards a lower completion delay, given any feedback matrix and for any scheduling approach.

We start our analysis by proving the following lemma.
\begin{lemma} \label{T1}
For any feedback matrix, all coding combinations of any S-IDNC scheme can be formulated as a graph $\mathcal{G}_s$ that is a subgraph of the G-IDNC graph $\mathcal{G}$.
\end{lemma}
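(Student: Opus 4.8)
The plan is to first fix a graph representation $\mathcal{G}_s$ of an arbitrary S-IDNC scheme and then verify, edge by edge, that $\mathcal{G}_s$ sits inside $\mathcal{G}$. Recall that under the S-IDNC rule of \cite{J:LuXiaRas}, an XOR combination of a packet set $\mathcal{S}\subseteq\mathcal{N}$ is admissible only if \emph{every} receiving node lacks at most one packet of $\mathcal{S}$, so that a node lacking exactly one of them decodes it instantly while every other node already holds all of $\mathcal{S}$. I would therefore place a vertex $v_{ij}$ in $\mathcal{G}_s$ for each (node, packet) pair that the given S-IDNC scheme may serve — necessarily with $j\in\mathcal{L}_i$, so the vertex set of $\mathcal{G}_s$ is contained in that of $\mathcal{G}$ — and join $v_{ij}$ and $v_{kl}$ exactly when the combination that simultaneously delivers $j$ to $i$ and $l$ to $k$ is S-IDNC admissible. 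A short preliminary step is to confirm that with this definition the cliques of $\mathcal{G}_s$ are exactly the admissible S-IDNC coding combinations: distinct packet indices appearing in a clique cannot be lacked together by any node (else pairwise admissibility fails), while coinciding indices collapse to a single packet in $\mathcal{S}$, so no node lacks more than one packet of the combination.

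The core of the proof is then to show $E(\mathcal{G}_s)\subseteq E(\mathcal{G})$. Take any edge $(v_{ij},v_{kl})$ of $\mathcal{G}_s$. If $j=l$, this is exactly condition C1, hence an edge of $\mathcal{G}$. If $j\neq l$, S-IDNC admissibility of $\{j,l\}$ says that no node in the network lacks both $j$ and $l$; specializing this to node $i$, which lacks $j$ since $v_{ij}$ is a vertex, forces $l\in\mathcal{H}_i$, and specializing it to node $k$ forces $j\in\mathcal{H}_k$ — precisely condition C2, so $(v_{ij},v_{kl})$ is again an edge of $\mathcal{G}$. Since the vertex set of $\mathcal{G}_s$ is contained in that of $\mathcal{G}$ and every edge of $\mathcal{G}_s$ is an edge of $\mathcal{G}$, $\mathcal{G}_s$ is a subgraph of $\mathcal{G}$; in particular every clique of $\mathcal{G}_s$, i.e. every S-IDNC coding combination, is a clique of $\mathcal{G}$ and hence a valid G-IDNC combination, which is the consequence exploited in the rest of Section~\ref{PerformanceComparison}.

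I expect no genuine obstacle here, only two points that must be handled carefully. The first is pinning down the S-IDNC graph so that its cliques really do correspond to S-IDNC transmissions — in particular treating secondary (unwanted) lost packets and repeated packet indices consistently, since for the combination to serve $j$ to $i$ instantly one still needs $l\in\mathcal{H}_i$ even if $l$ is merely a secondary packet for $i$. The second is making explicit that the S-IDNC condition is a \emph{global} constraint over all receiving nodes: when it is restricted to just the two endpoints of an edge it already produces C1/C2, and the constraints contributed by the remaining nodes can only remove edges, never create an edge absent from $\mathcal{G}$. With these observations the containment follows directly from the definitions of Section~\ref{IDNCGraph}.
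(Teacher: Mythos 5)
Your proposal is correct and follows essentially the same route as the paper: both construct the S-IDNC graph on the vertices of $\mathcal{G}$ and check edge by edge that the S-IDNC adjacency conditions (your ``no node lacks two packets of the combination,'' the paper's $\widetilde{\mbox{C}}2$ with $\mathcal{M}_j\cap\mathcal{M}_l=\emptyset$) imply C1 or C2. Your added observations---that restricting the global S-IDNC constraint to the two endpoint nodes already yields C2, and that the remaining nodes' constraints can only delete edges---make explicit what the paper leaves implicit, but the argument is the same.
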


\begin{proof} \label{P1}
For any feedback matrix, we can generate a graph $\mathcal{G}_s$ for S-IDNC packet transmission, which will have the same vertices of the corresponding G-IDNC graph $\mathcal{G}$ for the same feedback matrix. Since packet combination possibilities are represented by the edges in the graph, we can enforce the S-IDNC constraint on the vertex connectivity conditions. Defining $\mathcal{M}_j$ and $\mathcal{M}_l$ as the sets of receivers that lack packets $j$ and $l$, respectively, we can express the S-IDNC conditions to serve two vertices simultaneously as follows:
\begin{itemize}
\item $\widetilde{\mbox{C}}1$: $j=l$.
\item $\widetilde{\mbox{C}}2$: $j \in \mathcal{H}_k\ \mbox{and}\ l \in \mathcal{H}_i\ \mbox{and}\ \mathcal{M}_j\cap \mathcal{M}_l = \emptyset$.
\end{itemize}
Indeed, for condition $\widetilde{\mbox{C}}1$ ($j = l$), the edges do not represent coding combinations since both end vertices of the edge request the same packet. Consequently, this edge on its own represents the transmission of the source packet $j$, which can never be a non-instantly decodable for any TN, and thus does not violate the coding constraints in S-IDNC. As for condition $\widetilde{\mbox{C}}2$, it does indeed involve a packet combination $j \oplus l$ but again the last constraint $\mathcal{M}_j\cap \mathcal{M}_l = \emptyset$ guarantees that no coding of two packets missed by the same receiver can be done.

Comparing these conditions to those in Section \ref{IDNCGraph} (C1 and C2), it is obvious that both C1 and $\widetilde{\mbox{C}}1$ are equivalent. Thus, all the adjacent vertices using condition C1 in $\mathcal{G}_s$ will also be adjacent in $\mathcal{G}$. On the other hand, C2 is a relaxed condition of $\widetilde{\mbox{C}}2$, and thus any two vertices satisfying $\widetilde{\mbox{C}}2$ also satisfy C2 but not vice versa. Consequently, all edges generated in $\mathcal{G}_s$ using $\widetilde{\mbox{C}}2$ will also be generated in $\mathcal{G}$ using C2. However, $\mathcal{G}$ will have extra edges between vertices that satisfy C2 but not $\widetilde{\mbox{C}}2$.

We can thus conclude that $\mathcal{G}_s$ is a subgraph of $\mathcal{G}$ for any feedback matrix.
\end{proof}
\ignore{
\begin{corollary}
The erasure-less completion delay of G-IDNC is at most as large as that of the S-IDNC scheme.
\end{corollary}
\begin{proof}
Assuming erasure-less recovery transmissions, it is known that, given the graph representation of both the G-IDNC and S-IDNC, the completion delay of each of these scheme is equivalent to the chromatic number of their corresponding complimentary graph \cite{TON09}. The complimentary graph $\mathcal{G}^c\left(\mathcal{V}^c,\mathcal{E}^c\right)$ of a graph $\mathcal{G}\left(\mathcal{V},\mathcal{E}\right)$ is defined such that $\mathcal{V}^c = \mathcal{V}$ (i.e. both graphs have the same vertex set) and $\mathcal{E}^c = \left(\mathcal{V}\times\mathcal{V}\right)\setminus \mathcal{E}$ (i.e. $\mathcal{G}^c$ has all the edges of the complete graph consisting of the vertex set $\mathcal{V}$ except those in $\mathcal{E}$)
\end{proof}
}

From \cite{J:CDMin}, it has been shown that a transmission aiming to reduce the completion delay should target the largest number of receivers with largest Wants sets and erasure probabilities, as such transmission brings the system the closest to completion by:
\begin{itemize}
\item Reducing the largest individual completion delays of worst case receivers.
\item Maximizing the resulting number of coding opportunities for subsequent transmissions.
\end{itemize}
Such strategy, termed the \emph{WoRLT strategy}, can be applied by attributing the weights $\left(\frac{\left|\mathcal{W}_i\right|}{1 - p_i}\right)^n$ to the vertices of each TN $i$ in the IDNC graph, and choosing the maximum weight clique (MWC) ($p_i$ is the erasure probability of TN $i$). For large $n$, the chosen MWCs are always the ones satisfying the WoRLT strategy.

After the initial transmission phase (i.e. at time $t=0$), both $\mathcal{G}_s$ and $\mathcal{G}$ can be generated for the corresponding feedback matrix. Lemma 1 shows that $\mathcal{G}_s$ will be a subset of $\mathcal{G}$. Consequently, the MWC $\kappa^*_s$ chosen from $\mathcal{G}_s$ will definitely be a subset of or equal to a maximal clique $\kappa_g(\kappa^*_s)$ in $\mathcal{G}$, which need not be the MWC in $\mathcal{G}$. Thus, the weight $\omega\left\{\kappa_g(\kappa^*_s)\right\}$ of $\kappa_g(\kappa^*_s)$ will be definitely smaller than or equal to that of the MWC $\kappa^*_g$ in $\mathcal{G}$. Consequently, $\omega\left\{\kappa_g^*\right\} \geq \omega\left\{\kappa_g(\kappa^*_s)\right\} \geq \omega\left\{\kappa_s^*\right\}$, and thus $\kappa_g^*$ can bring the system closer to completion compared to $\kappa_s^*$.

Assume that we send the packet corresponding to $\kappa_g^*$ at $t=0$. It is easy to see by repeating the same analysis that sending the MWC of the resulting G-IDNC graph at $t=1$ will bring the system even closer to completion. Consequently, we can infer from the previous discussion that the use of the G-IDNC graph all along the recovery phase will bring the system to completion at least as fast as the use of S-IDNC graph.

\section{Proposed Algorithm}\label{Algorithm}
\ignore{The contribution of this paper is the enhanced reduction in the expected completion delay that is achieved by G-IDNC scheme in communication networks assisted by decode-and-forward RNs. In such a network, for every recovery transmission, each transmitting node applies an algorithm, such as \textit{Maximum Weight Clique Selection Algorithm} \cite{C:ExAlgMWC} and \textit{Maximum Weight Vertex Search Algorithm} \cite{J:CDMin} that exploits the WoRLT strategy \cite{C:BCDMin}, \cite{J:CODen} to select the maximal clique that achieves the minimum completion delay for this node. WoRLT strategy simultaneously achieves the following goals:
\begin{enumerate}
\item Minimizing the distance between the Wants vector of the system and the absorbing states.
\item Maximizing the coding opportunities in the primary graph of successor states.
\end{enumerate}	}

\subsection{One-RN Scenario}
The algorithm of the recovery transmission phase for the one-RN case is performed on two steps as follows:\\
\textbf{\textit{Step 1:}}
The BS carries out the re-transmission process to the TNs and the RN as follows:
\begin{enumerate}
\item The weight of the vertices of each TN $i$ is set to be $\left(\frac{\left|\mathcal{W}_i\right|}{1 - p_{_{(BS,i)}}}\right)^n$, where $p_{_{(BS,i)}}$ is the erasure probability from the BS to TN $i$. Note that the RN has an empty Wants set and all its vertices are in the secondary graph. We thus assign a very low weight to its vertices in the secondary graph, which makes their selection the last priority. This is similar to the assumption used in \cite{J:LuXiaRas}, except that we do care about RN decodability when possible\ignore{, whereas the solution in \cite{J:LuXiaRas} does not care about it at all at this step}.
\item Select the primary MWC $\kappa_\rho$ in the IDNC primary graph. By doing so, the TNs that have the largest Wants sets and erasure probability are targeted.
\item Extract the secondary subgraph $\mathcal{G}(\kappa_\rho)$ consisting of secondary vertices each of which being adjacent to all the vertices in $\kappa_\rho$.
\item Select the MWC $\kappa_\sigma$ in $\mathcal{G}(\kappa_\rho)$. By doing so, the TNs that have the largest Wants sets but lower erasure probability of receiving a packet are targeted with unwanted information. If possible, the RN is also is targeted with extra packets.
\item Send the packet corresponding to the combined clique $\kappa^* = \kappa_\rho \cup \kappa_\sigma$.
\item Update the feedback matrix according to the received feedback and repeat this step again until either of these conditions occur:
    \begin{itemize}
    \item The Wants sets of all TNs are depleted, which means that they received all their wanted packets.
    \item The RN receives all the packets that are still in the Wants set of any of the TNs.
    \end{itemize}
    In the former condition, the BS declares frame delivery and start the transmission of subsequent frames, if any. In the latter, the system moves to Step 2.
\end{enumerate}
\textbf{\textit{Step 2:}}
The RN continues the recovery process by following the same procedures of Step 1, with the following two changes:
\begin{itemize}
\item There are no vertices belonging to the RN in the graph.
 \item The weights of the vertices of each TN $i$ become $\left(\frac{\left|\mathcal{W}_i\right|}{1 - p_{_{(RN,i)}}}\right)^n$, where $p_{_{(RN,i)}}$ is the erasure probability from the RN to TN $i$.
\end{itemize}
This step is repeated until all TNs receive their wanted packets.

\subsection{Multiple-RN Scenario} \label{sec:3-RN}
In the multiple-RN scenario, the algorithm should perform a selection of both the transmitting RN and the clique determining the coded transmission, in order to achieve the least possible completion delay in the whole network. Thus, the algorithm operates as follows:\\
\textbf{\textit{Step 1:}}
The BS sends NC recovery packets to the TNs and the RNs, similar to the one-RN case described above. Nonetheless, the algorithm moves to Step 2 when:
 \begin{equation}\label{eq:MRN}
 \bigcup_{i\in\mathcal{M}} \mathcal{W}_i \subseteq \bigcup_{h\in\mathcal{R}} \mathcal{H}_h.
  \end{equation}
  In other words, the algorithm moves to Step 2 when each packet that is still needed by any receiver is received by at least one RN. This condition clearly shortens Step 1 compared to the one-RN scenario because of the diversity of the Has sets at the different RNs.\\
\textbf{\textit{Step 2:}}
All $R$ RNs continue the recovery process together, under the management of the BS. For each recovery transmission, the chosen RN and clique for transmission are jointly selected as follows:
\begin{enumerate}
\item Each RN $h\in\mathcal{R}$ separately selects its MWC $\kappa^*(h)$ similar to the one-RN scenario, except that it employs the weight $\left(\frac{\left|\mathcal{W}_i\right|}{1 - p_{_{(h,i)}}}\right)^n$ for the vertices of TN $i$.
\item The BS selects the RN $h^* = \arg\max_{h\in\mathcal{R}} \;\omega\{\kappa^*(h)\}$ (i.e. the RN having the MWC with the highest weight) and gives permission to this RN to transmit the packet combination defined by $\kappa^*(h^*)$ in this recovery transmission.
\end{enumerate}
This step is repeated until all TNs receive their wanted packets. Note that Step 2 in this case may take longer than its one-RN version. Indeed, the condition in (\ref{eq:MRN}) makes it very unlikely that any of the $R$ RNs will have all the packets still wanted by all the TNs by the end of Step 1. Thus, each of the $R$ RNs will need to deliver its received subset of packets in separate transmissions, which may take longer than Step 2 in the one-RN case.

\ignore{
\begin{itemize}
The weight of each TN from the perspective of a certain RN is equal to its Wants set size divided by its success probability of receiving a packet from that RN.
\item Search for all the possible maximal cliques using the IDNC graph $\mathcal{G}(\mathcal{V},\mathcal{E})$ and the weight of each vertex constructing the graph. This search is conducted by either \textit{Maximum Weight Clique Selection Algorithm} or \textit{Maximum Weight Vertex Search Algorithm} for each RN.
\item Select the primary layer of IDNC graph $\mathcal{G}_\rho(\mathcal{V}_\rho,\mathcal{E}_\rho)$ that achieves the minimum norm between the absorbing point and the weighted version of the Wants vector for each RN.
\item Select the secondary layer of IDNC graph $\mathcal{G}_\sigma(\mathcal{V}_\sigma,\mathcal{E}_\sigma)$, attached to the the selected primary graph, that achieves the minimum norm between the the absorbing point and the weighted version of the Wants vector for each RN.
\item Now each RN selected the optimal maximal clique from its point of view according to the erasure probability vector of receiving a packet from each RN. The following step will choose which RN will be chosen to send its clique to the TNs
\end{itemize}
\textbf{\textit{Step 3:}}
Selecting the RN and its optimal maximal clique whose primary layer of its IDNC graph $\mathcal{G}_\rho(\mathcal{V}_\rho,\mathcal{E}_\rho)$ that achieves the minimum norm between the absorbing point and the weighted version of its Wants vector. The tiebreaker is the secondary layer of the IDNC graph $\mathcal{G}_\sigma(\mathcal{V}_\sigma,\mathcal{E}_\sigma)$, attached to the the primary graph, that achieves the minimum norm between the the absorbing point and the weighted version of the Wants vector. By doing so, the optimal maximal clique is guaranteed to be selected among all the possible maximal cliques that can be transmitted from any transmitting node in the system at each recovery transmission count. It is obvious from the assumed channel model and recovery transmission steps introduced in section \ref{intro} that the RNs are beneficial to improving the efficiency of broadcast and multicast networks.
}

\subsection{Maximum Vertex Search Algorithms}
Since the MWC algorithm has high computational complexity, we propose a modification to the above algorithms by replacing the MWC algorithm with the maximum vertex search (MVS) algorithm, proposed in \cite{C:BCDMin,J:CDMin}, and having complexity of $O((M+R)^2N)$. This algorithm constructs the transmission clique by greedily choosing in each step the vertex that has both the largest weight and is connected to the largest number of vertices having high weights. The details of this algorithm can be found in \cite{C:BCDMin,J:CDMin}.

\section{Simulation Results}
\ignore{In this section, we first compare the performance of G-IDNC vs S-IDNC in terms of reducing the completion delay for both the one-RN and three-RN scenarios using the above WoRLT strategy based algorithms. We then compare the performance of the proposed G-IDNC three-RN algorithm when the WoRLT strategy is used compared to several imported famous approaches in the PMP NC literature.}

\subsection{G-IDNC vs S-IDNC}
For the one-RN and three-RN scenarios, Fig. \ref{CD} depicts the average completion delay against the number of TNs $M$ for G-IDNC and S-IDNC schemes. For each of the two schemes, the WoRLT strategy is employed for vertex weighting and both MWC and MVS are tested. The number of packets used in the simulation is $N = 30$ and each TN requires on average 80$\%$ of these packets. The packet erasure probabilities on the BS-TN, BS-RN and RN-TN channels are randomly selected from the ranges $[0.3,0.5]$, $[0.1,0.2]$ and $[0.05,0.15]$, respectively.
\begin{figure}
\centering
\includegraphics[width=1\linewidth]{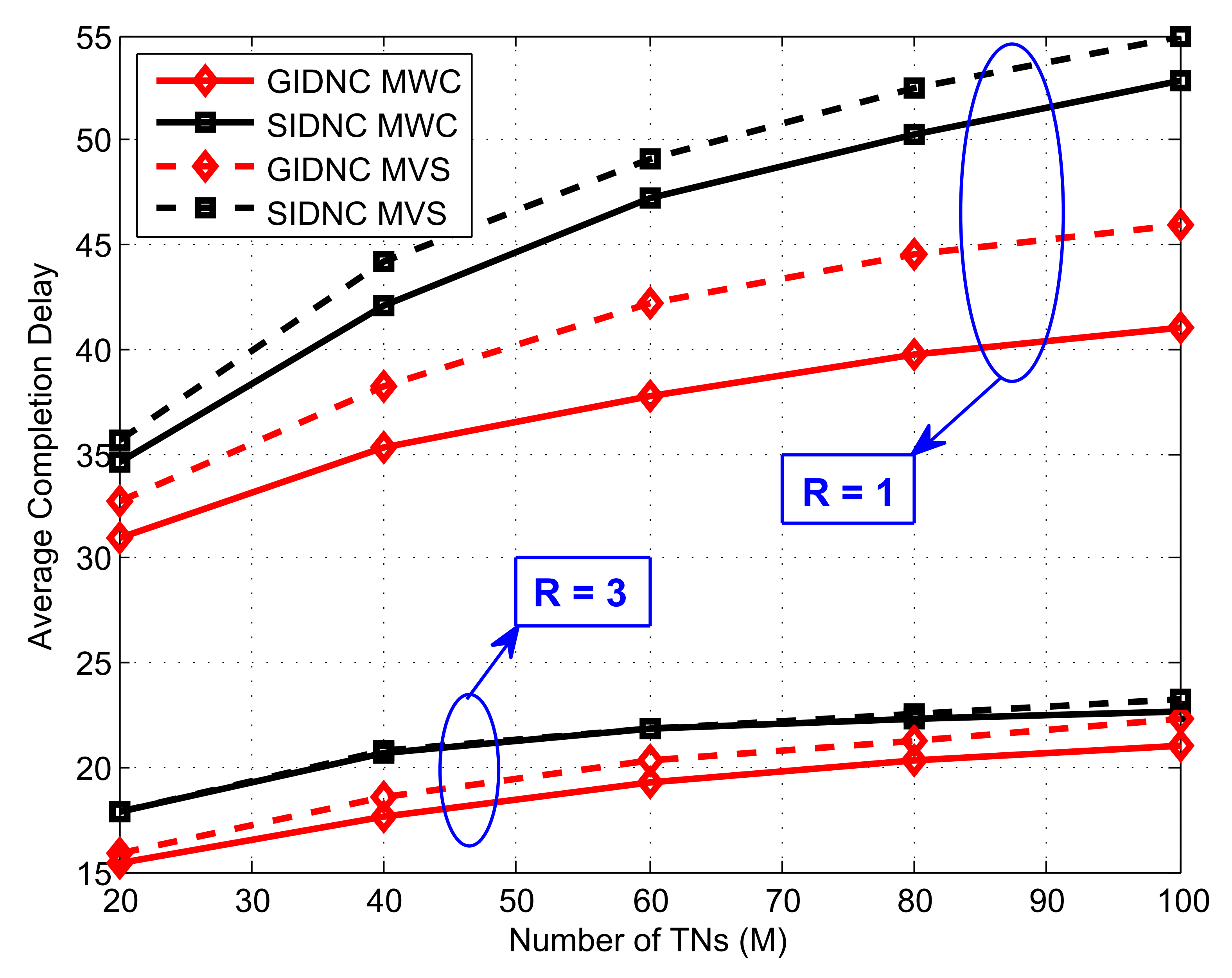}
\caption{Average completion delay vs \textit{$M$} ($N$=30, 80$\%$ average TN demand) for different scenarios (one and three RNs) and algorithms (MWC and MVS). }
\label{CD}
\end{figure}

For both the one-RN and three-RN scenarios, it is obvious that the performance of G-IDNC is superior to S-IDNC for both the MWC and MVS algorithms. Similar results have been obtained for all ranges of $N$ and TN average demand but weren't included due to space limitation. Furthermore, we can see that the three-RN case achieves a significant improvement compared to the one-RN case. This means that, compared to the one-RN algorithm, the shortening that occurs in Step 1 of the three-RN case, due to the RN reception diversity, has a much bigger impact than the lengthening of Step 2 due to the separate RN transmissions (resulting from the incomplete Has sets of individual RNs at the end of Step 1). Moreover, the results show that the MVS algorithm performance gets closer to the one of MWC algorithm at larger numbers of RNs.

\subsection{Joint Clique-RN Selection for G-IDNC}
For the G-IDNC scheme and $R=3$, Fig. \ref{Schd} compares the average completion delay achieved by the WoRLT joint clique-RN selection strategy with the following joint approaches:\\
\textbf{\textit{Approach 1}}:
Choosing the clique-RN combination that maximizes the service of the maximum clique. To find the maximum clique in each RN, the weight of each vertex is set to 1.\\
\textbf{\textit{Approach 2}}:
Choosing the clique-RN combination that serves the maximum expected number of TNs. To find the maximum expected number of TNs that can be served by each RN, the weight of each vertex $v_{ij}$ is set to $\left(1-p_{_{h,i}}\right)$.\\
\textbf{\textit{Approach 3}}:
Choosing the clique-RN combination that maximizes the service of the most wanted packets. The weight of each vertex $v_{ij}$ is set to be $P_j$, defined as the number of TNs wanting packet $j$.\\
After finding $\kappa^*(h)$ for any recovery transmission using any of the above approaches, the RN $h^*$ having $h^*  = \arg\max_{h \in \mathcal{R}} \left\{\sum_{i|v_{ij}\in \kappa^*_h} \left(1-p_{_{h,i}}\right)\right\}$ is selected by the BS to send the packet combination defined by $\kappa^*(h^*)$ in this recovery transmission. The BS thus selects the RN that has higher chance of delivering its selected clique\ignore{according to the Approach rule}.

It is obvious that WoRLT based joint clique-RN scheduling approach achieves a significant improvement in the average completion delay over all these different approaches that are common in the PMP NC literature.


\begin{figure}
\centering
\includegraphics[width=1\linewidth]{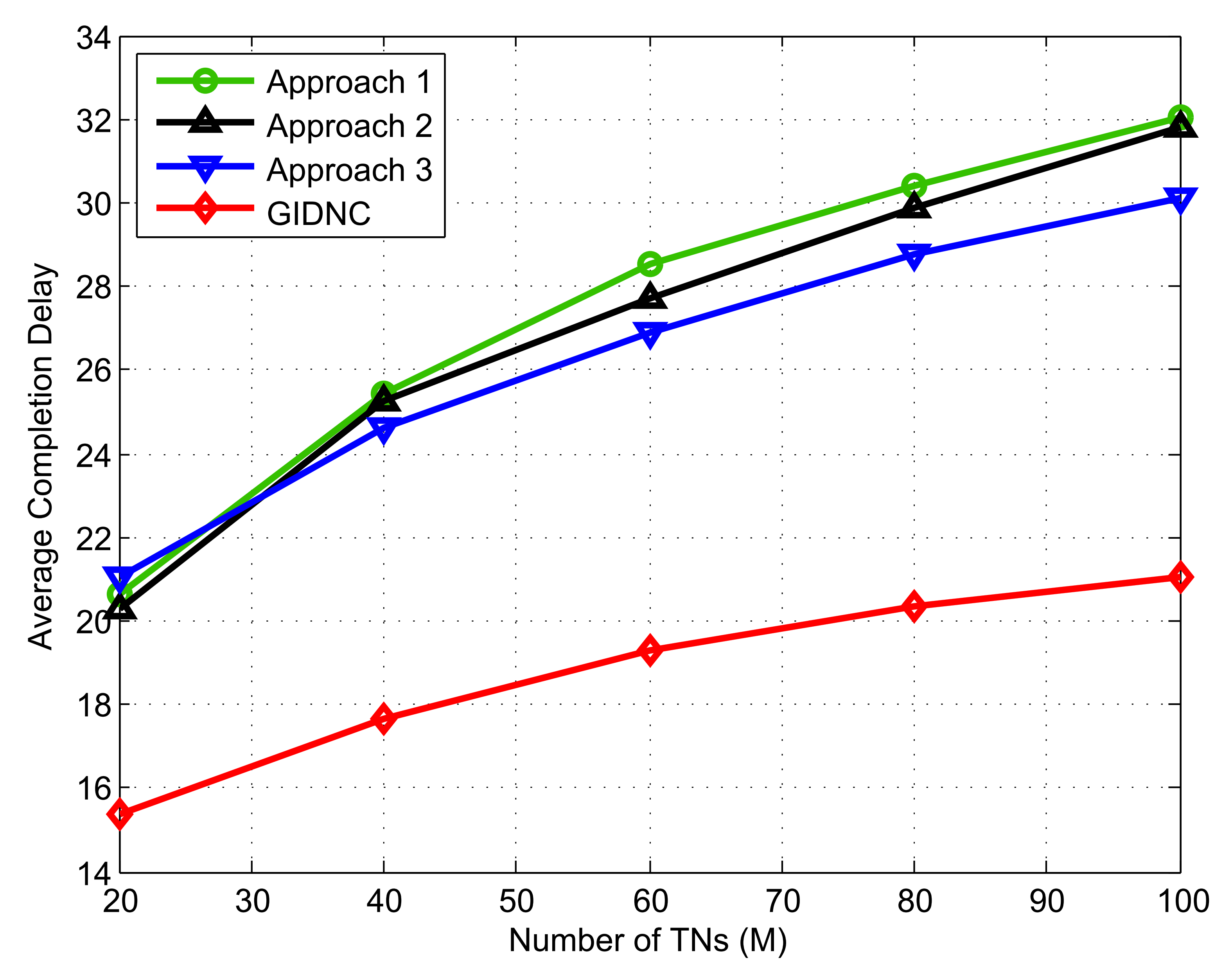}
\caption{Average completion delay vs \textit{$M$} for different scheduling schemes for the three-RN scenario ($N$=30, 80$\%$ average TN demand).}
\label{Schd}
\end{figure}

\section{Conclusion}
In this paper, we investigated the problem of minimizing the frame completion delay for instantly decodable network coding in RN-assisted wireless multicast networks. To achieve a lower completion delay, we first proposed the use of the the generalized version of the IDNC scheme to the one-RN scenario instead of the previously proposed \textit{strict} IDNC version in the literature. We supported this exchange by showing that S-IDNC provides coding combinations that are always a subset of the G-IDNC combinations for the same feedback matrix. We then designed G-IDNC based WoRLT algorithm for the one-RNs scenario, and extended it to the multiple-RN scenario. Simulation results showed that our proposed solutions achieve much better performance compared to the S-IDNC solutions in the literature. They also showed that the use of the WoRLT strategy achieves a significantly better completion delay compared to other commonly used strategies in the PMP NC literature.

\bibliographystyle{IEEEbib}
\linespread{1}
\bibliography{References}

\end{document}